\providecommand{\U}[1]{\protect\rule{.1in}{.1in}}
\newtheorem{theorem}{Theorem}
\newtheorem{lemma}[theorem]{Lemma}
\newenvironment{proof}[1][Proof]{\noindent\textbf{#1.} }{\ \rule{0.5em}{0.5em}}
\let\pdfoutput=\undefined\fi
\begin{document}

\title{\textbf{Periodic Solutions of 2D Isothermal Euler-Poisson Equations with
Possible Applications to Spiral and Disk-like Galaxies}}
\author{M\textsc{an Kam Kwong\thanks{E-mail address: mankwong@polyu.edu.hk }}\\\textit{Department of Applied Mathematics,}\\\textit{The Hong Kong Polytechnic University,}\\\textit{Hung Hom, Kowloon, Hong Kong}
\and M\textsc{anwai Yuen\thanks{Corresponding author and E-mail address:
nevetsyuen@hotmail.com }}\\\textit{Department of Mathematics and Information Technology,}\\\textit{The Hong Kong Institute of Education,}\\\textit{10 Lo Ping Road, Tai Po, New Territories, Hong Kong}}
\date{Revised 29-May-2014}
\maketitle

\begin{abstract}
Compressible Euler-Poisson equations are the standard self-gravitating models
for stellar dynamics in classical astrophysics. In this article, we construct
periodic solutions to the isothermal ($\gamma=1$) Euler-Poisson equations in
$R^{2}$ with possible applications to the formation of plate, spiral galaxies
and the evolution of gas-rich, disk-like galaxies. The results complement
Yuen's solutions without rotation (M.W. Yuen, \textit{Analytical Blowup
Solutions to the 2-dimensional Isothermal Euler-Poisson Equations of Gaseous
Stars}, J. Math. Anal. Appl. \textbf{341 (}2008\textbf{), }445--456.). Here,
the periodic rotation prevents the blowup phenomena that occur in solutions
without rotation. Based on our results, the corresponding $3$D rotational
results for Goldreich and Weber's solutions are conjectured.

\ 

MSC2010: 35B10 35B40 35Q85 76U05 85A05 85A15

Key Words: Euler-Poisson Equations, Periodic Solutions,\ Rotational Solutions,
Galaxies Formation, Galaxies Evolution, Gaseous Stars

\end{abstract}

\section{Introduction}

The evolution of self-gravitating galaxies or gaseous stars in astrophysics
can be described by the compressible Euler-Poisson equations:
\begin{equation}
\left\{
\begin{array}
[c]{rl}%
{\normalsize \rho}_{t}{\normalsize +\nabla\cdot(\rho\vec{u})} &
{\normalsize =}{\normalsize 0}\\
\rho(\vec{u}_{t}+(\vec{u}\cdot\nabla)\vec{u}){\normalsize +\nabla P} &
{\normalsize =}{\normalsize -\rho\nabla\Phi}\\
{\normalsize \Delta\Phi(t,\vec{x})} & {\normalsize =\alpha(N)}%
{\normalsize \rho}\text{,}%
\end{array}
\right.  \label{Euler-Poisson2DRotation}%
\end{equation}
where $\alpha(N)$ is a constant related to the unit ball in $R^{N}$, such that
$\alpha(1)=2$, $\alpha(2)=2\pi$, and for $N\geq3$%
\begin{equation}
\alpha(N)=N(N-2)V(N)=N(N-2)\frac{\pi^{N/2}}{\Gamma(N/2+1)}\text{,}%
\end{equation}
where $V(N)$ is the volume of the unit ball in $R^{N}$ and $\Gamma$ is the
Gamma function. The unknown functions $\rho=\rho(t,\vec{x})$ and $\vec{u}%
=\vec{u}(t,\vec{x})=(u_{1},u_{2},....,u_{N})\in\mathbf{R}^{N}$ are the density
and the velocity, respectively. The $\gamma$-law is usually imposed on the
pressure term:%
\begin{equation}
P={\normalsize P}\left(  \rho\right)  {\normalsize =K\rho}^{\gamma}%
\end{equation}
with the constant $\gamma\geq1$. In addition, the ideal fluid is called
isothermal if $\gamma=1$. The Poisson equation (\ref{Euler-Poisson2DRotation}%
)$_{3}$ can be solved as%
\begin{equation}
{\normalsize \Phi(t,\vec{x})=}\int_{R^{N}}Green(\vec{x}-\vec{y})\rho(t,\vec
{y}){\normalsize d\vec{y}}\text{,}%
\end{equation}
with the Green's function%
\begin{equation}
Green(\vec{x})=\left\{
\begin{array}
[c]{ll}%
\log|\vec{x}| & \text{for }N=2\\
\frac{-1}{|\vec{x}|^{N-2}} & \text{for }N\geq3.
\end{array}
\right.
\end{equation}
For $N=3$, the Euler-Poisson equations (\ref{Euler-Poisson2DRotation}) are the
classical models in stellar dynamics given in \cite{BT}, \cite{C}, \cite{KW}
and \cite{Longair}. Some results on local existence of the system can be found
in \cite{M}, \cite{B}, and \cite{G}.

If we seek solutions with radial symmetry, the Poisson equation
(\ref{Euler-Poisson2DRotation})$_{3}$ is transformed to%
\begin{equation}
{\normalsize r^{N-1}\Phi}_{rr}\left(  {\normalsize t,r}\right)  +\left(
N-1\right)  r^{N-2}\Phi_{r}\left(  {\normalsize t,r}\right)  {\normalsize =}%
\alpha\left(  N\right)  {\normalsize \rho r^{N-1}}%
\end{equation}%
\begin{equation}
\Phi_{r}=\frac{\alpha\left(  N\right)  }{r^{N-1}}\int_{0}^{r}\rho
(t,s)s^{N-1}ds.
\end{equation}

In particular, radially symmetric solutions without rotation can be expressed
as%
\begin{equation}
\rho(t,\vec{x})=\rho(t,r)\text{, }\vec{u}(t,\vec{x})=\frac{\vec{x}}{r}V(t,r)
\end{equation}
with the radius $r:=\left(  \sum_{i=1}^{N}x_{i}^{2}\right)  ^{1/2}$. In 1980,
Goldreich and Weber first constructed analytical blowup (collapsing) solutions
of the $3$D Euler-Poisson equations for $\gamma=4/3$ for the non-rotating gas
spheres \cite{GW}. In 1992, Makino \cite{M1} provided a rigorous proof of the
existence of these kinds of blowup solutions. In 2003, Deng, Xiang and Yang
\cite{DXY} generalized the solutions to higher dimensions, $R^{N}$($N\geq3$).
In 2008, Yuen constructed the corresponding solutions (without compact
support) in $R^{2}$ with $\gamma=1$ \cite{YuenJMAA2008a}. In summary, the
family of the analytical solutions is as follows:\newline for $N\geq3$ and
$\gamma=(2N-2)/N$, in \cite{DXY}
\begin{equation}
\left\{
\begin{array}
[c]{c}%
\rho(t,r)=\left\{
\begin{array}
[c]{c}%
\dfrac{1}{a(t)^{N}}f(\frac{r}{a(t)})^{N/(N-2)}\text{ for }r<a(t)S_{\mu}\\
0\text{ for }a(t)S_{\mu}\leq r
\end{array}
\right.  \text{, }V{\normalsize (t,r)=}\dfrac{\dot{a}(t)}{a(t)}{\normalsize r}%
\\
\ddot{a}(t){\normalsize =}\dfrac{-\lambda}{a(t)^{N-1}},\text{ }%
{\normalsize a(0)=a}_{0}>0{\normalsize ,}\text{ }\dot{a}(0){\normalsize =a}%
_{1}\\
\ddot{f}(s){\normalsize +}\dfrac{N-1}{s}\dot{f}(s){\normalsize +}\dfrac
{\alpha(N)}{(2N-2)K}f{\normalsize (s)}^{N/(N-2)}{\normalsize =}\frac
{{\normalsize N(N-2)\lambda}}{{\normalsize (2N-2)K}}{\normalsize ,}\text{
}f(0)=\alpha>0,\text{ }\dot{f}(0)=0\text{,}%
\end{array}
\right.  \label{3-dgamma=4over3}%
\end{equation}
where the finite $S_{\mu}$ is the first zero of $f(s)$ and\newline for $N=2$
and $\gamma=1$, in \cite{YuenJMAA2008a}%
\begin{equation}
\left\{
\begin{array}
[c]{c}%
\rho(t,r)=\dfrac{1}{a(t)^{2}}e^{f(\frac{r}{a(t)})}\text{, }%
V{\normalsize (t,r)=}\dfrac{\dot{a}(t)}{a(t)}{\normalsize r}\\
\ddot{a}(t){\normalsize =}\dfrac{-\lambda}{a(t)},\text{ }{\normalsize a(0)=a}%
_{0}>0{\normalsize ,}\text{ }\dot{a}(0){\normalsize =a}_{1}\\
\ddot{f}(s){\normalsize +}\dfrac{1}{s}\dot{f}(s){\normalsize +\dfrac{2\pi}%
{K}e}^{f(s)}{\normalsize =\frac{2\lambda}{K},}\text{ }f(0)=\alpha,\text{ }%
\dot{f}(0)=0.
\end{array}
\right.  \label{solution 3 Yuen}%
\end{equation}

Similar solutions exist for other similar systems, see, for example,
\cite{YuenNonlinearity2009} and \cite{YuenCQG2009}. All the above known
solutions are without rotation.

For the 2D Euler equations with $\gamma=2$,
\begin{equation}
\left\{
\begin{array}
[c]{rl}%
{\normalsize \rho}_{t}{\normalsize +\nabla\cdot(\rho\vec{u})} &
{\normalsize =}{\normalsize 0}\\
\rho(\vec{u}_{t}+(\vec{u}\cdot\nabla)\vec{u}){\normalsize +\nabla P} &
{\normalsize =0,}%
\end{array}
\right.
\end{equation}
Zhang and Zheng \cite{ZZ} in 1995 constructed the following explicitly spiral
solutions:%
\begin{equation}
\rho=\frac{r^{2}}{8Kt^{2}}\text{, }u_{1}=\frac{1}{2t}(x+y)\text{, }u_{2}%
=\frac{1}{2t}(x-y)
\end{equation}
in $r\leq2t\sqrt{\dot{P}_{0}}$, and%
\begin{equation}
\left\{
\begin{array}
[c]{c}%
\rho=\rho_{0},\\
u_{1}=(2t\dot{P}_{0}\cos\theta+\sqrt{2\dot{P}_{0}}\sqrt{r^{2}-2t^{2}\dot
{P}_{0}}\sin\theta)/r,\\
u_{2}=(2t\dot{P}_{0}\sin\theta-\sqrt{2\dot{P}_{0}}\sqrt{r^{2}-2t^{2}\dot
{P}_{0}}\cos\theta)/r
\end{array}
\right.
\end{equation}
in $r>2t\sqrt{\dot{P}_{0}}$, where $\rho_{0}>0$ is an arbitrary parameter,
$\dot{P}_{0}=\dot{P}(\rho_{0})$, $x=r\cos\theta$ and $y=r\sin\theta.$

In this article, we combine the above results to construct solutions with
rotation for the $2$D isothermal Euler-Poisson equations. Our main
contribution is in applying the isothermal pressure term to balance the
potential force term to generate novel solutions.

\begin{theorem}
\label{2-dRotation}For the isothermal ($\gamma=1$) Euler-Poisson equations
(\ref{Euler-Poisson2DRotation}) in $R^{2}$, there exists a family of global
solutions with rotation in radial symmetry,%
\begin{equation}
\left\{
\begin{array}
[c]{c}%
\rho(t,\vec{x})=\rho(t,r)=\frac{1}{a(t)^{2}}e^{f(\frac{r}{a(t)})}\text{,
}{\normalsize u}_{1}{\normalsize =}\frac{\overset{\cdot}{a}(t)}{a(t)}%
x-\frac{\xi}{a(t)^{2}}y\text{, }u_{2}=\frac{\xi}{a(t)^{2}}x+\frac
{\overset{\cdot}{a}(t)}{a(t)}y,\\
\ddot{a}(t)=\frac{-\lambda}{a(t)}+\frac{\xi^{2}}{a(t)^{3}}\text{, }%
a(0)=a_{0}>0\text{, }\dot{a}(0)=a_{1}\\
\overset{\cdot\cdot}{f}(s){\normalsize +}\frac{1}{s}\overset{\cdot}%
{f}(s){\normalsize +\frac{2\pi}{K}e}^{f(s)}{\normalsize =}\frac{2\lambda}%
{K}{\normalsize ,}\text{ }f(0)=\alpha,\text{ }\overset{\cdot}{f}(0)=0\text{,}%
\end{array}
\right.  \label{2-DIsothermalRotation}%
\end{equation}
with arbitrary constants $\xi\neq0,$ $a_{0}$, $a_{1}$ and $\alpha$.

\begin{itemize}
\item[\textrm{(I)}] With $\lambda>0$,\newline\textrm{(a)\,} solutions
(\ref{2-DIsothermalRotation}) are non-trivially time-periodic, except for the
case $a_{0}=\frac{\left\vert \xi\right\vert }{\sqrt{\lambda}}$and $a_{1}%
=0$;\newline\textrm{(b)\,} if $a_{0}=\frac{\left\vert \xi\right\vert }%
{\sqrt{\lambda}}$ and $a_{1}=0$, solutions (\ref{2-DIsothermalRotation}) are steady.

\item[\textrm{(II)}] With $\lambda\leq0$, solutions
(\ref{2-DIsothermalRotation}) are global in time.
\end{itemize}
\end{theorem}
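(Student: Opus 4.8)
The proof falls into two essentially independent parts: a direct verification that the ansatz (\ref{2-DIsothermalRotation}) is an exact solution of (\ref{Euler-Poisson2DRotation}) for every value of the separation constant $\lambda$, and a one-dimensional phase-plane study of the scalar equation for $a(t)$ that produces the three regimes in (I)(a), (I)(b) and (II). Throughout I would write the velocity as $\vec{u}=\frac{\dot{a}}{a}\vec{x}+\frac{\xi}{a^{2}}\vec{x}^{\perp}$, where $\vec{x}^{\perp}:=(-y,x)$, and put $s:=r/a(t)$.

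\textbf{Substitution into the equations.} For the mass equation, using $\nabla\cdot\vec{x}^{\perp}=0$, $\vec{x}^{\perp}\cdot\vec{x}=0$ and $\nabla s=\vec{x}/(ar)$, one computes $\nabla\cdot(\rho\vec{u})=\rho\,\nabla\cdot\vec{u}+\vec{u}\cdot\nabla\rho=\frac{2\dot{a}}{a}\rho+\frac{\dot{a}}{a}s\dot{f}(s)\rho$ and $\rho_{t}=-\frac{2\dot{a}}{a}\rho-\frac{\dot{a}}{a}s\dot{f}(s)\rho$, so that (\ref{Euler-Poisson2DRotation})$_{1}$ holds identically, for \emph{any} $a$ and $f$ (the rotational part of $\vec{u}$ is divergence-free and orthogonal to $\nabla\rho$, so it drops out). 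For the momentum equation I would compute $\vec{u}_{t}$ and $(\vec{u}\cdot\nabla)\vec{u}$ separately: the $\vec{x}^{\perp}$-components cancel and the $\vec{x}$-components add up to $\vec{u}_{t}+(\vec{u}\cdot\nabla)\vec{u}=\big(\frac{\ddot{a}}{a}-\frac{\xi^{2}}{a^{4}}\big)\vec{x}$, which, by the $a$-equation of (\ref{2-DIsothermalRotation}), equals $-\frac{\lambda}{a^{2}}\vec{x}$. Since $\rho^{-1}\nabla P=K\nabla\log\rho=K\dot{f}(s)\,\vec{x}/(ar)$, equation (\ref{Euler-Poisson2DRotation})$_{2}$ reduces to $\nabla\Phi=\frac{\lambda}{a^{2}}\vec{x}-K\dot{f}(s)\,\vec{x}/(ar)$. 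Comparing with the radial Poisson solution $\Phi_{r}=\frac{2\pi}{r}\int_{0}^{r}\rho(t,r')r'\,dr'$ and rescaling the integral by $r'=a\sigma$, the condition becomes $2\pi\int_{0}^{s}e^{f(\sigma)}\sigma\,d\sigma=\lambda s^{2}-Ks\dot{f}(s)$; both sides vanish at $s=0$ and differentiating in $s$ gives precisely the $f$-equation of (\ref{2-DIsothermalRotation}). Hence the ansatz solves the system; the global solvability on $[0,\infty)$ of the $f$-equation is identical to that of (\ref{solution 3 Yuen}) in \cite{YuenJMAA2008a} and I would just cite it.

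\textbf{The reduced dynamics.} Multiplying $\ddot{a}=-\lambda/a+\xi^{2}/a^{3}$ by $\dot{a}$ and integrating yields the energy first integral $\tfrac12\dot{a}^{2}+W(a)=E$, where $W(a):=\lambda\log a+\frac{\xi^{2}}{2a^{2}}$ and $E:=\tfrac12 a_{1}^{2}+W(a_{0})\ge W(a_{0})$. When $\lambda>0$: since $\xi\ne0$, $W(a)\to+\infty$ as $a\to0^{+}$, and $W(a)\to+\infty$ as $a\to\infty$, while $W'(a)=(\lambda a^{2}-\xi^{2})/a^{3}$ vanishes only at $a^{*}:=|\xi|/\sqrt{\lambda}$, a strict global minimum. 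If $(a_{0},a_{1})=(a^{*},0)$ then $E=W(a^{*})$ and $\dot{a}^{2}=2(E-W(a))\ge0$ forces $a\equiv a^{*}$, so $\rho$ and $\vec{u}$ are time-independent: this is (I)(b). Otherwise $E>W(a^{*})$, the sublevel set $\{W\le E\}$ is a compact interval $[a_{-},a_{+}]$ with $0<a_{-}<a^{*}<a_{+}<\infty$, the orbit $\{\tfrac12\dot{a}^{2}+W(a)=E\}$ is a simple closed curve around $(a^{*},0)$, the turning points $a_{\pm}$ are non-degenerate because $W'(a_{\pm})\ne0$, and the period $T=2\int_{a_{-}}^{a_{+}}da/\sqrt{2(E-W(a))}$ is finite; thus $a$, hence $\rho(t,\cdot)$ and $\vec{u}(t,\cdot)$, is a non-constant $T$-periodic function, which is (I)(a). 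When $\lambda\le0$: $\ddot{a}=-\lambda/a+\xi^{2}/a^{3}>0$ for all $a>0$ and $W$ is strictly decreasing with $W(0^{+})=+\infty$ (again by $\xi\ne0$), so $a$ is confined to $[a_{-},\infty)$ with $a_{-}>0$ the unique root of $W(a_{-})=E$; since $\int^{\infty}da/\sqrt{2(E-W(a))}=+\infty$ (the integrand tends to $(2E)^{-1/2}$ when $\lambda=0$ and behaves like $(2|\lambda|\log a)^{-1/2}$ when $\lambda<0$), $a(t)$ neither collapses to $0$ nor escapes to $\infty$ in finite time, so it is defined for all $t$; this is (II).

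The substitution in the first part is lengthy but entirely routine. The real content is the claim in the second part that, for $\lambda>0$, the solution is genuinely periodic with a \emph{finite} period and never leaves $(0,\infty)$; this is exactly where the rotation enters, since it is the centrifugal term $\xi^{2}/a^{3}$ with $\xi\ne0$ that supplies the potential barrier $W(a)\to+\infty$ as $a\to0^{+}$ and so prevents the finite-time collapse $a\to0$ that occurs for the non-rotating solutions (\ref{solution 3 Yuen}) when $\lambda>0$.
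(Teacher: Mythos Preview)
Your proof is correct and follows essentially the same approach as the paper: direct substitution of the ansatz into the system (the paper does this component-by-component in Lemma~\ref{lem:generalsolutionformasseqrotation2d} and the proof of Theorem~\ref{2-dRotation}, while you use the more compact $\vec{x}^{\perp}$ notation), followed by the energy/first-integral analysis of the Emden equation (the paper's Lemma~\ref{lemma22--2drotation}). Your treatment of case (II) is actually a bit more explicit than the paper's, which simply asserts that a ``similar analysis'' applies.
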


Here, $2$D rotational solutions (\ref{2-DIsothermalRotation}) of the
Euler-Poisson equations (\ref{Euler-Poisson2DRotation}) may be reference
examples for modeling the formation of plate and spiral galaxies or gaseous
stars in the non-relativistic content, because most of the matter is gas at
the early stage of their evolution. Readers can refer to \cite{ZZ} for the
detail description of astrophysical situations. In addition, solutions
(\ref{2-DIsothermalRotation}) may also be applied to the development of
gas-rich and disk-like (dwarf) galaxies \cite{BT}.

\noindent\textbf{Remark.} \emph{By taking $\xi=0$ for solutions
(\ref{2-DIsothermalRotation}) in Theorem \ref{2-dRotation}, we obtain Yuen's
non-rotational solutions (\ref{solution 3 Yuen}), which blow up in a finite
time $T$ if $\lambda>0$. However, the rotational (when $\xi\neq0$) term in
(\ref{2-DIsothermalRotation}) prevents the blowup phenomena.}

\section{Periodic and Spiral Solutions}

Our main work is to design the relevant functions with rotation to fit the 2D
mass equation (\ref{Euler-Poisson2DRotation})$_{1}$.

\begin{lemma}
\label{lem:generalsolutionformasseqrotation2d}For the 2D equation of
conservation of mass
\begin{equation}
\rho_{t}+\nabla\cdot\left(  \rho\vec{u}\right)  =0,
\label{massequationspherical2Drotation}%
\end{equation}
there exist the following solutions:%
\begin{equation}
\rho(t,\vec{x})=\rho(t,r)=\frac{f\left(  \frac{r}{a(t)}\right)  }{a(t)^{2}%
},\text{ }{\normalsize u}_{1}{\normalsize =}\frac{\overset{\cdot}{a}(t)}%
{a(t)}x-\frac{G(t,r)}{r}y\text{, }u_{2}=\frac{G(t,r)}{r}x+\frac{\overset
{\cdot}{a}(t)}{a(t)}y \label{generalsolutionformassequation2Drotation}%
\end{equation}
with arbitrary $C^{1}$ functions $f(s)\geq0$ and $G(t,r)$ and $a(t)>0\in
C^{1}.$
\end{lemma}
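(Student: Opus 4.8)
The plan is to verify directly that the proposed ansatz satisfies the continuity equation, by substituting $\rho$ and $\vec u$ into $\rho_t + \nabla\cdot(\rho\vec u)=0$ and simplifying. First I would compute $\rho_t$: since $\rho(t,r) = a(t)^{-2} f(r/a(t))$ and $r = |\vec x|$ is time-independent, the chain rule gives $\rho_t = -2\dot a\, a^{-3} f(r/a) - a^{-2}\dot a\, r\, a^{-2} f'(r/a) = -\dot a\, a^{-3}\left(2 f(r/a) + (r/a) f'(r/a)\right)$. Next I would compute the divergence term. It is cleanest to split $\vec u = \vec u^{\mathrm{rad}} + \vec u^{\mathrm{rot}}$ where $\vec u^{\mathrm{rad}} = (\dot a/a)\vec x$ is the scaling (radial) part and $\vec u^{\mathrm{rot}} = (G(t,r)/r)(-y,x)$ is the rotational part. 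Then $\nabla\cdot(\rho\vec u) = \nabla\cdot(\rho\, \vec u^{\mathrm{rad}}) + \nabla\cdot(\rho\, \vec u^{\mathrm{rot}})$.

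The key observation, which I expect to be the crux of the verification, is that the rotational part contributes nothing: $\nabla\cdot(\rho\, \vec u^{\mathrm{rot}}) = 0$. This is because $\rho$ is radial, and the vector field $(G(t,r)/r)(-y,x)$ is, at each fixed $t$, of the form $h(r)(-y,x)$ which is divergence-free (it is tangent to circles and has magnitude depending only on $r$ — explicitly, $\partial_x\!\left(-h(r)y\right) + \partial_y\!\left(h(r)x\right) = -h'(r)(x/r)y + h'(r)(y/r)x = 0$). Since $\rho h(r)$ is again a radial function times $(-y,x)$, the same computation gives zero. So the entire contribution reduces to the radial/scaling part, which is exactly the structure already handled in Yuen's non-rotational construction \cite{YuenJMAA2008a}: writing $\nabla\cdot(\rho\,(\dot a/a)\vec x) = (\dot a/a)\left(\vec x\cdot\nabla\rho + \rho\,\nabla\cdot\vec x\right) = (\dot a/a)\left(r\,\partial_r\rho + 2\rho\right)$, and with $\partial_r\rho = a^{-3} f'(r/a)$ this equals $(\dot a/a)\left((r/a)\,a^{-2} f'(r/a) + 2 a^{-2} f(r/a)\right) = \dot a\, a^{-3}\left((r/a) f'(r/a) + 2 f(r/a)\right)$.

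Adding the two pieces, $\rho_t + \nabla\cdot(\rho\vec u) = -\dot a\, a^{-3}\left(2f + (r/a)f'\right) + \dot a\, a^{-3}\left((r/a)f' + 2f\right) + 0 = 0$, which is the desired identity. The only regularity needed is $f \in C^1$, $a \in C^1$ with $a>0$ (so $\rho$ and its first derivatives exist and the chain rule applies), and $G \in C^1$ in $(t,r)$ so that $\vec u$ is $C^1$ away from the origin; the sign condition $f\ge 0$ guarantees $\rho\ge 0$ so that the solution is physically admissible as a density. I would present the computation in polar-type coordinates but note that the divergence-free property of the rotational field is the one structural fact doing all the work, and it holds precisely because $G$ depends on $(t,r)$ only and not on the angle $\theta$.
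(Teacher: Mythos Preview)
Your proof is correct and follows the same overall strategy as the paper: direct substitution of the ansatz into the continuity equation and verification that everything cancels. The paper carries this out by a brute-force Cartesian expansion (starting from a slightly more general radial component $F(t,r)/r$ in place of $\dot a/a$), watches all the $G$-terms cancel pairwise, reduces to $\rho_t+\rho_r F+\rho F_r+\rho F/r$, and then specializes $F=(\dot a/a)r$ with the self-similar density to get zero. Your decomposition $\vec u=\vec u^{\mathrm{rad}}+\vec u^{\mathrm{rot}}$ together with the observation that $\rho\,\vec u^{\mathrm{rot}}$ is a radial multiple of the angular field $(-y,x)$ and hence divergence-free is a cleaner, more conceptual way to see exactly the same cancellation; it explains \emph{why} the $G$-terms disappear rather than merely exhibiting the cancellation line by line. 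Both routes then handle the radial/scaling part identically.
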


\begin{proof}
We plug the following functional form
\begin{equation}
\rho(t,\vec{x})=\rho(t,r)=\frac{f\left(  \frac{r}{a(t)}\right)  }{a(t)^{2}%
},\text{ }{\normalsize u}_{1}{\normalsize =}\frac{F(t,r)}{r}x-\frac{G(t,r)}%
{r}y\text{, }u_{2}=\frac{G(t,r)}{r}x+\frac{F(t,r)}{r}y
\end{equation}
with arbitrary $C^{1}$ functions $f(s)\geq0$, $F(t,r)$, $G(t,r)$ and
$a(t)>0\in C^{1}$, into the 2D mass equation
(\ref{massequationspherical2Drotation}) to have
\begin{equation}
\rho_{t}+\nabla\cdot\left(  \rho\vec{u}\right)
\end{equation}%
\begin{equation}
=\rho_{t}+\frac{\partial}{\partial x}\left(  \rho\frac{Fx}{r}-\rho\frac{Gy}%
{r}\right)  +\frac{\partial}{\partial y}\left(  \rho\frac{Fy}{r}+\rho\frac
{Gx}{r}\right)
\end{equation}%
\begin{align}
&  =\rho_{t}+\left(  \frac{\partial}{\partial x}\rho\right)  \frac{Fx}{r}%
+\rho\left(  \frac{\partial}{\partial x}\frac{Fx}{r}\right)  -\left(
\frac{\partial}{\partial x}\rho\right)  \frac{Gy}{r}-\rho\left(
\frac{\partial}{\partial x}\frac{Gy}{r}\right) \nonumber\\
&  +\left(  \frac{\partial}{\partial y}\rho\right)  \frac{Fy}{r}+\rho\left(
\frac{\partial}{\partial y}\frac{Fy}{r}\right)  +\left(  \frac{\partial
}{\partial y}\rho\right)  \frac{Gx}{r}+\rho\left(  \frac{\partial}{\partial
y}\frac{Gx}{r}\right)
\end{align}%
\begin{align}
&  =\rho_{t}+\rho_{r}\frac{x}{r}\frac{Fx}{r}+\rho\left(  F_{r}\frac{x}%
{r}\right)  \frac{x}{r}+\rho\frac{F}{r}-\rho Fx\frac{x}{r^{3}}\nonumber\\
&  -\rho_{r}\frac{x}{r}\frac{Gy}{r}-\rho G_{r}\frac{x}{r}\frac{y}{r}+\rho
Gy\frac{x}{r^{3}}+\rho_{r}\frac{y}{r}\frac{Fy}{r}+\rho\left(  F_{r}\frac{y}%
{r}\right)  \frac{y}{r}\nonumber\\
&  +\rho\frac{F}{r}-\rho Fy\frac{y}{r^{3}}+\rho_{r}\frac{y}{r}\frac{Gx}%
{r}+\rho\left(  G_{r}\frac{y}{r}\right)  \frac{x}{r}-\rho Gx\frac{y}{r^{3}}%
\end{align}%
\begin{align}
&  =\rho_{t}+\rho_{r}\frac{x}{r}\frac{Fx}{r}+\rho\left(  F_{r}\frac{x}%
{r}\right)  \frac{x}{r}+\rho\frac{F}{r}-\rho Fx\frac{x}{r^{3}}\nonumber\\
&  +\rho_{r}\frac{y}{r}\frac{Fy}{r}+\rho\left(  F_{r}\frac{y}{r}\right)
\frac{y}{r}+\rho\frac{F}{r}-\rho Fy\frac{y}{r^{3}}%
\end{align}%
\begin{equation}
=\rho_{t}+\rho_{r}F+\rho F_{r}+\rho F\frac{1}{r}. \label{tomassradial}%
\end{equation}
Then we take the self-similar structure for the density function%
\begin{equation}
\rho(t,\vec{x})=\rho(t,r)=\frac{f\left(  \frac{r}{a(t)}\right)  }{a(t)^{2}%
}\text{,}%
\end{equation}
and $F(t,r)=\frac{\dot{a}(t)}{a(t)}r$ for the velocity $\vec{u}$ to balance
equation (\ref{tomassradial}):%
\begin{equation}
=\frac{\partial}{\partial t}\frac{f\left(  \frac{r}{a(t)}\right)  }{a(t)^{2}%
}+\left(  \frac{\partial}{\partial r}\frac{f\left(  \frac{r}{a(t)}\right)
}{a(t)^{2}}\right)  \frac{\dot{a}(t)r}{a(t)}+\frac{f\left(  \frac{r}%
{a(t)}\right)  }{a(t)^{2}}\frac{\dot{a}(t)}{a(t)}+\frac{f\left(  \frac
{r}{a(t)}\right)  }{a(t)^{2}}\frac{\dot{a}(t)}{a(t)}%
\end{equation}%
\begin{align}
&  =\frac{-2\overset{\cdot}{a}(t)f\left(  \frac{r}{a(t)}\right)  }{a(t)^{3}%
}-\frac{\overset{\cdot}{a}(t)r\overset{\cdot}{f}\left(  \frac{r}{a(t)}\right)
}{a(t)^{4}}\nonumber\\
&  +\frac{\overset{\cdot}{f}\left(  \frac{r}{a(t)}\right)  }{a(t)^{3}}%
\frac{\overset{\cdot}{a}(t)r}{a(t)}+\frac{f\left(  \frac{r}{a(t)}\right)
}{a(t)^{2}}\frac{\overset{\cdot}{a}(t)}{a(t)}+\frac{f\left(  \frac{r}%
{a(t)}\right)  }{a(t)^{2}}\frac{\overset{\cdot}{a}(t)}{a(t)}%
\end{align}%
\begin{equation}
=0.
\end{equation}
The proof is completed.
\end{proof}

The following Lemma is required to show the cyclic phenomena of the rotational
solutions (\ref{2-DIsothermalRotation}).

\begin{lemma}
\label{lemma22--2drotation}With $\xi\neq0$, for the Emden equation%
\begin{equation}
\ddot{a}(t)=\frac{-\lambda}{a(t)}+\frac{\xi^{2}}{a(t)^{3}},\text{ }%
a(0)=a_{0}>0,\text{ }\dot{a}(0)=a_{1}\text{,} \label{eq124-2drotation}%
\end{equation}

\begin{itemize}
\item[\textrm{(I)}] with $\lambda>0$, the solution is non-trivially periodic,
except for the case with $a_{0}=\frac{\left\vert \xi\right\vert }%
{\sqrt{\lambda}}$ and $a_{1}=0$;

\item[\textrm{(II)}] with $\lambda\leq0$, the solution is global.
\end{itemize}
\end{lemma}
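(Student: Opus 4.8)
The plan is to treat \eqref{eq124-2drotation} as a one-dimensional conservative system and extract a conserved energy. Multiplying the ODE by $\dot a$ and integrating gives
\begin{equation}
\frac{1}{2}\dot a(t)^{2}+\lambda\log a(t)+\frac{\xi^{2}}{2a(t)^{2}}=E,
\end{equation}
where $E=\frac{1}{2}a_{1}^{2}+\lambda\log a_{0}+\frac{\xi^{2}}{2a_{0}^{2}}$ is fixed by the initial data. So the dynamics lives on a level set of the effective potential $\Psi(a):=\lambda\log a+\dfrac{\xi^{2}}{2a^{2}}$ on $a\in(0,\infty)$. First I would analyze $\Psi$: we have $\Psi'(a)=\dfrac{\lambda}{a}-\dfrac{\xi^{2}}{a^{3}}=\dfrac{\lambda a^{2}-\xi^{2}}{a^{3}}$.

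For part (I), $\lambda>0$. Then $\Psi'(a)=0$ exactly at $a_{\ast}=\left\vert\xi\right\vert/\sqrt{\lambda}$, and $\Psi$ is strictly decreasing on $(0,a_{\ast})$ and strictly increasing on $(a_{\ast},\infty)$; moreover $\Psi(a)\to+\infty$ as $a\to0^{+}$ (the $\xi^{2}/2a^{2}$ term dominates) and $\Psi(a)\to+\infty$ as $a\to\infty$ (the $\lambda\log a$ term). Thus $\Psi$ is a smooth convex-type well with a unique minimum $\Psi_{\min}=\Psi(a_{\ast})$. If $E=\Psi_{\min}$ the only possibility is $a\equiv a_{\ast}$ with $\dot a\equiv 0$, which forces $a_{0}=a_{\ast}$ and $a_{1}=0$: this is the steady exceptional case. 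Otherwise $E>\Psi_{\min}$, and since each side of the well is strictly monotone and blows up, there are exactly two roots $0<a_{-}<a_{\ast}<a_{+}<\infty$ of $\Psi(a)=E$, with $\Psi<E$ strictly in between and $\Psi'\ne0$ at the endpoints. Standard phase-plane / ODE reasoning (the orbit is a closed curve $\tfrac12\dot a^{2}=E-\Psi(a)$ in the $(a,\dot a)$ plane, bounded away from $a=0$ and $a=\infty$, with the turning points $a_{\pm}$ nondegenerate) then gives that the solution is defined for all $t$, oscillates monotonically between $a_{-}$ and $a_{+}$, and is periodic with period $T=2\displaystyle\int_{a_{-}}^{a_{+}}\frac{da}{\sqrt{2(E-\Psi(a))}}<\infty$; the finiteness of $T$ follows because near each turning point $E-\Psi(a)$ vanishes to first order, so the integrand is integrable. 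Non-triviality is just $a_{-}<a_{+}$, i.e. the solution is not constant, which holds precisely when we are not in the exceptional case.

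For part (II), $\lambda\le0$. Here I would argue directly that solutions cannot leave $(0,\infty)$ in finite time, in either direction. The lower bound $a(t)\ge a_{-}>0$ for some $a_{-}$ again comes from the energy identity: $\frac12\dot a^{2}=E-\Psi(a)$ with $\Psi(a)=\lambda\log a+\frac{\xi^{2}}{2a^{2}}$; as $a\to0^{+}$, $\Psi(a)\to+\infty$ (both terms, since $\lambda\log a\to+\infty$ when $\lambda<0$, and $=+\infty$ from the $\xi^{2}$ term when $\lambda=0$), so $E-\Psi(a)$ becomes negative, which is impossible — hence $a$ stays above the largest root $a_{-}$ of $\Psi(a)=E$ to the left of $a_{0}$. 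For the upper direction there is no turning point on the right when $\lambda\le0$: $\Psi$ is eventually decreasing (for $\lambda<0$, $\Psi\to-\infty$; for $\lambda=0$, $\Psi\to0$), so $a$ may grow without bound, but it cannot do so in finite time: from $\frac12\dot a^{2}=E-\Psi(a)\le E+C$ on $a\ge a_{-}$ (using $\Psi$ bounded below on $[a_{-},\infty)$ when $\lambda\le0$... more precisely $\ddot a=-\lambda/a+\xi^{2}/a^{3}$ is bounded on $a\ge a_{-}$, so $\dot a$ grows at most linearly and $a$ at most quadratically) we get no finite-time escape. Therefore the maximal interval of existence is all of $\mathbb{R}$.

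The main obstacle is part (I): one must verify carefully that the turning points $a_{\pm}$ are simple zeros of $E-\Psi(a)$ so that (i) the period integral converges and (ii) the solution genuinely reverses direction there rather than stalling — this is where the sign condition $\lambda>0$ and $\xi\ne0$ are used, guaranteeing $\Psi'(a_{\pm})\ne0$. The rest is routine qualitative ODE analysis once the energy function and the shape of $\Psi$ are in hand. I would also note that $\xi\ne0$ is essential: when $\xi=0$ the $\xi^{2}/2a^{2}$ barrier disappears, $\Psi(a)=\lambda\log a\to-\infty$ as $a\to0^{+}$ for $\lambda>0$, and one recovers the finite-time blowup of the non-rotational case rather than periodicity.
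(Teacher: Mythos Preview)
Your proposal is correct and follows essentially the same approach as the paper: derive the conserved energy $\tfrac12\dot a^{2}+\lambda\ln a+\xi^{2}/(2a^{2})=\mathrm{const}$, analyze the shape of the effective potential (single well with minimum at $|\xi|/\sqrt{\lambda}$ when $\lambda>0$), and conclude periodicity via the standard closed-orbit argument with period $T=2\int_{a_{-}}^{a_{+}}da/\sqrt{2(E-\Psi(a))}$. The only cosmetic difference is that the paper shows $T<\infty$ by an explicit three-piece splitting of the integral rather than by invoking the nondegeneracy $\Psi'(a_{\pm})\neq0$ of the turning points as you do; your part~(II) is also spelled out in more detail than the paper's ``by similar analysis''.
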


\begin{proof}
The proof is standard and similar to Lemma 3 in \cite{YuenCQG2009} for the
Euler-Poisson equations with a negative cosmological constant.\newline(I) For
equation (\ref{eq124-2drotation}), we could multiply $\dot{a}(t)$ and
integrate it in the following manner:%
\begin{equation}
\frac{\dot{a}(t)^{2}}{2}+\lambda\ln a(t)+\frac{\xi^{2}}{2a(t)^{2}}=\theta
\end{equation}
with the constant $\theta=\frac{a_{1}^{2}}{2}+\lambda\ln a_{0}+\frac{\xi^{2}%
}{2a_{0}^{2}}$.\newline Then, we could define the kinetic energy as
\begin{equation}
F_{kin}=\frac{\dot{a}(t)^{2}}{2}%
\end{equation}
and the potential energy as%
\begin{equation}
F_{pot}=\lambda\ln a(t)+\frac{\xi^{2}}{2a(t)^{2}}.
\end{equation}
Here, the total energy is conserved such that%
\begin{equation}
\frac{d}{dt}(F_{kin}+F_{pot})=0.
\end{equation}
The potential energy function has only one global minimum at $\overset{-}%
{a}=\frac{\left\vert \xi\right\vert }{\sqrt{\lambda}}$ for $a(t)\in
(0,+\infty)$. Therefore, by the classical energy method (in Section 4.3 of
\cite{LS}), the solution for equation (\ref{eq124-2drotation}) has a closed
trajectory. The time for traveling the closed orbit is%
\begin{equation}
T=2\int_{t_{1}}^{t_{2}}dt=2\int_{a_{\min}}^{a_{\max}}\frac{da(t)}%
{\sqrt{2\left[  \theta-\left(  \lambda\ln a(t)+\frac{\xi^{2}}{2a(t)^{2}%
}\right)  \right]  }}\text{,}\label{hk12DRotation}%
\end{equation}
where $a(t_{1})=a_{\min}=\underset{t\geq0}{\inf}(a(t))$ and $a(t_{2})=a_{\max
}=\underset{t\geq0}{\sup}(a(t))$ with some constants $t_{1}$, $t_{2\text{ }}%
$such that $t_{2}\geq t_{1}\geq0$.\newline We let $H(t)=\theta-\left(
\lambda\ln a(t)+\frac{\xi^{2}}{2a(t)^{2}}\right)  $, $H_{0}=\left\vert
\theta-\left(  \lambda\ln(a_{\min}+\epsilon)+\frac{\xi^{2}}{2(a_{\min
}+\epsilon)^{2}}\right)  \right\vert $ , and $H_{1}=\left\vert \theta-\left(
\lambda\ln(a_{\max}-\epsilon)+\frac{\xi^{2}}{2(a_{\max}-\epsilon)^{2}}\right)
\right\vert $. Except for the case with $a_{0}=\frac{\left\vert \xi\right\vert
}{\sqrt{\lambda}}$ and $a_{1}=0$, the time in equation (\ref{hk12DRotation})
can be estimated by%
\begin{align}
T &  =\int_{a_{\min}}^{a_{\min}+\epsilon}\frac{2da(t)}{\sqrt{2\left[
\theta-\left(  \lambda\ln a(t)+\frac{\xi^{2}}{2a(t)^{2}}\right)  \right]  }%
}+\int_{a_{\min}+\epsilon}^{a_{\max}-\epsilon}\frac{2da(t)}{\sqrt{2\left[
\theta-\left(  \lambda\ln a(t)+\frac{\xi^{2}}{2a(t)^{2}}\right)  \right]  }%
}\nonumber\\
&  +\int_{a_{\max}-\epsilon}^{a_{\max}}\frac{2da(t)}{\sqrt{2\left[
\theta-\left(  \lambda\ln a(t)+\frac{\xi^{2}}{2a(t)^{2}}\right)  \right]  }}%
\end{align}
with a sufficient small constant $\epsilon>0$,%
\begin{align}
&  \leq\underset{a_{\min}\leq a(t)\leq a_{\min}+\epsilon}{\sup}\left\vert
\frac{1}{-\frac{\lambda}{a(t)}+\frac{\xi^{2}}{a(t)^{3}}}\right\vert \int
_{0}^{H_{0}}\frac{\sqrt{2}dH(t)}{\sqrt{H(t)}}+\int_{a_{\min}+\epsilon
}^{a_{\max}-\epsilon}\frac{2da(t)}{\sqrt{2\left[  \theta-\left(  \lambda\ln
a(t)+\frac{\xi^{2}}{2a(t)^{2}}\right)  \right]  }}\nonumber\\
&  +\underset{a_{\max}-\epsilon\leq a(t)\leq a_{\max}}{\sup}\left\vert
\frac{1}{-\frac{\lambda}{a(t)}+\frac{\xi^{2}}{a(t)^{3}}}\right\vert \int
_{0}^{H_{1}}\frac{\sqrt{2}dH(t)}{\sqrt{H(t)}}%
\end{align}%
\begin{align}
&  =\underset{a_{\min}\leq a\leq a_{\min}+\epsilon}{\sup}\left\vert \frac
{1}{-\frac{\lambda}{a(t)}+\frac{\xi^{2}}{a(t)^{3}}}\right\vert 2\sqrt{2}%
\sqrt{H_{0}}+\int_{a_{\min}+\epsilon}^{a_{\max}-\epsilon}\frac{2da(t)}%
{\sqrt{2\left[  \theta-\left(  \lambda\ln a(t)+\frac{\xi^{2}}{2a(t)^{2}%
}\right)  \right]  }}\nonumber\\
&  +\underset{a_{\max}-\epsilon\leq a(t)\leq a_{\max}}{\sup}\left\vert
\frac{1}{-\frac{\lambda}{a(t)}+\frac{\xi^{2}}{a(t)^{3}}}\right\vert 2\sqrt
{2}\sqrt{H_{1}}%
\end{align}%
\begin{equation}
<\infty.
\end{equation}
Therefore, we have (a) the solutions to the Emden equation
(\ref{eq124-2drotation}) are non-trivially periodic except for the case with
$a_{0}=\frac{\left\vert \xi\right\vert }{\sqrt{\lambda}}$ and $a_{1}%
=0$.\newline Figure 2 below shows a particular solution for the Emden
equation:%
\begin{equation}
\left\{
\begin{array}
[c]{c}%
\ddot{a}(t)=\frac{-1}{a(t)}+\frac{1}{a(t)^{3}}\\
a(0)=1,\text{ }\dot{a}(0)=1.
\end{array}
\right.  \label{grapha(t)}%
\end{equation}

It is clear to see (b) if $a_{0}=\frac{\left\vert \xi\right\vert }%
{\sqrt{\lambda}}$ and $a_{1}=0$, the solutions (\ref{eq124-2drotation}) are steady.

By applying the similar analysis, we can show that\newline(II) with
$\lambda\leq0$, the solutions are global.\newline The proof is completed.
\end{proof}

After obtaining the above two lemmas, we can construct the periodic and spiral
solutions with rotation to the $2$D isothermal Euler-Poisson system
(\ref{Euler-Poisson2DRotation}) as follows.

\begin{proof}
[Proof of Theorem \ref{2-dRotation}]The procedure of the proof is similar to
the proof for the non-rotational fluids \cite{YuenJMAA2008a}. It is clear that
our functions (\ref{2-DIsothermalRotation}) satisfy Lemma
\ref{lem:generalsolutionformasseqrotation2d} for the mass equation
(\ref{Euler-Poisson2DRotation})$_{1}$. For the first momentum equation
(\ref{Euler-Poisson2DRotation})$_{2,1}$, we get%
\begin{equation}
\rho\left[  \frac{\partial u_{1}}{\partial t}+u_{1}\frac{\partial u_{1}%
}{\partial x}+u_{2}\frac{\partial u_{1}}{\partial y}\right]  +\frac{\partial
}{\partial x}P+\rho\frac{\partial}{\partial x}\Phi
\end{equation}%
\begin{equation}
=\rho\left[  \frac{\partial u_{1}}{\partial t}+u_{1}\frac{\partial u_{1}%
}{\partial x}+u_{2}\frac{\partial u_{1}}{\partial y}\right]  +\frac{\partial
}{\partial x}\frac{Ke^{f(\frac{r}{a(t)})}}{a(t)^{2}}+\rho\frac{\partial
}{\partial x}\Phi\text{.}%
\end{equation}
By defining the variable $s=\frac{r}{a(t)}$ with $\frac{\partial}{\partial
x}=\frac{\partial}{\partial r}\frac{\partial r}{\partial x}=\frac{x}{r}%
\frac{\partial}{\partial r}$, we have%
\begin{equation}
=\rho\left[  \frac{\partial u_{1}}{\partial t}+u_{1}\frac{\partial u_{1}%
}{\partial x}+u_{2}\frac{\partial u_{1}}{\partial y}\right]  +\frac{Ke^{f(s)}%
}{a(t)^{2}}\frac{x}{r}\frac{\partial}{a(t)\partial\left(  \frac{r}%
{a(t)}\right)  }f(s)+\frac{x}{r}\rho\frac{\partial}{\partial r}\Phi
\end{equation}%
\begin{equation}
=\rho\left[  \frac{\partial u_{1}}{\partial t}+u_{1}\frac{\partial u_{1}%
}{\partial x}+u_{2}\frac{\partial u_{1}}{\partial y}+\frac{K}{a(t)}\frac{x}%
{r}\frac{\partial}{\partial s}f(s)+\frac{x}{r}\frac{2\pi}{r}%
{\displaystyle\int\limits_{0}^{r}}
\frac{e^{f(\frac{\eta}{a(t)})}}{a(t)^{2}}\eta d\eta\right]
\end{equation}

\begin{equation}
=\rho\left[
\begin{array}
[c]{c}%
\frac{\partial}{\partial t}\left(  \frac{\dot{a}(t)}{a(t)}x-\frac{\xi
}{a(t)^{2}}y\right)  +\left(  \frac{\dot{a}(t)}{a(t)}x-\frac{\xi}{a(t)^{2}%
}y\right)  \frac{\partial}{\partial x}\left(  \frac{\dot{a}(t)}{a(t)}%
x-\frac{\xi}{a(t)^{2}}y\right)  \\
+\left(  \frac{\xi}{a(t)^{2}}x+\frac{\dot{a}(t)}{a(t)}y\right)  \frac
{\partial}{\partial y}\left(  \frac{\dot{a}(t)}{a(t)}x-\frac{\xi}{a(t)^{2}%
}y\right)  +\frac{x}{a(t)r}\left(  K\dot{f}(s)+\frac{2\pi}{\frac{r}{a(t)}}%
{\displaystyle\int\limits_{0}^{r}}
\frac{e^{f(\frac{\eta}{a(t)})}}{a(t)^{2}}\eta d\eta\right)
\end{array}
\right]
\end{equation}%
\begin{equation}
=\rho\left[
\begin{array}
[c]{c}%
\left(  \frac{\ddot{a}(t)}{a(t)}-\frac{\dot{a}(t)^{2}}{a(t)^{2}}\right)
x+\frac{2\xi\dot{a}(t)}{a(t)^{3}}y+\left(  \frac{\dot{a}(t)}{a(t)}x-\frac{\xi
}{a(t)^{2}}y\right)  \frac{\dot{a}(t)}{a(t)}\\
-\left(  \frac{\xi}{a(t)^{2}}x+\frac{\dot{a}(t)}{a(t)}y\right)  \frac{\xi
}{a(t)^{2}}+\frac{x}{a(t)r}\left(  K\dot{f}(s)+\frac{2\pi}{\frac{r}{a(t)}}%
{\displaystyle\int\limits_{0}^{r}}
e^{f(\frac{\eta}{a(t)})}\left(  \frac{\eta}{a(t)}\right)  d\left(  \frac{\eta
}{a(t)}\right)  \right)
\end{array}
\right]
\end{equation}%
\begin{equation}
=\frac{x\rho}{a(t)r}\left[  \left(  \ddot{a}(t)-\frac{\xi^{2}}{a(t)^{3}%
}\right)  r+K\dot{f}(s)+\frac{2\pi}{s}%
{\displaystyle\int\limits_{0}^{s}}
e^{f(\tau)}\tau d\tau\right]
\end{equation}%
\begin{equation}
=\frac{x\rho}{a(t)r}\left[  -\lambda s+K\dot{f}(s)+\frac{2\pi}{s}%
{\displaystyle\int\limits_{0}^{s}}
e^{f(\tau)}\tau d\tau\right]  \label{eq581}%
\end{equation}
with the Emden equation
\begin{equation}
\left\{
\begin{array}
[c]{c}%
\ddot{a}(t)=\frac{-\lambda}{a(t)}+\frac{\xi^{2}}{a(t)^{3}}\\
a(0)=a_{0}>0\text{, }\dot{a}(0)=a_{1}\text{,}%
\end{array}
\right.  \label{Endemeqeq2-Drotation}%
\end{equation}
with an arbitrary constant $\xi\neq0$.\newline Similarly, we obtain the
corresponding result for the second momentum equation
(\ref{Euler-Poisson2DRotation})$_{2,2}$ in the following manner with
$\frac{\partial}{\partial y}=\frac{\partial}{\partial r}\frac{\partial
r}{\partial y}=\frac{y}{r}\frac{\partial}{\partial r}$:%
\begin{equation}
\rho\left[  \frac{\partial u_{2}}{\partial t}+u_{1}\frac{\partial u_{2}%
}{\partial x}+u_{2}\frac{\partial u_{2}}{\partial y}\right]  +\frac{\partial
}{\partial y}P+\rho\frac{\partial}{\partial y}\Phi
\end{equation}%
\begin{equation}
=\rho\left[  \frac{\partial u_{2}}{\partial t}+u_{1}\frac{\partial u_{2}%
}{\partial x}+u_{2}\frac{\partial u_{2}}{\partial y}\right]  +\frac{Ke^{f(s)}%
}{a(t)^{2}}\frac{y}{r}\frac{\partial}{a(t)\partial\left(  \frac{r}%
{a(t)}\right)  }f(s)+\frac{y}{r}\rho\frac{\partial}{\partial r}\Phi
\end{equation}%
\begin{equation}
=\rho\left[  \frac{\partial u_{2}}{\partial t}+u_{1}\frac{\partial u_{2}%
}{\partial x}+u_{2}\frac{\partial u_{2}}{\partial y}+\frac{K}{a(t)}\frac{y}%
{r}\frac{\partial}{\partial s}f(s)+\frac{y}{r}\frac{2\pi}{r}%
{\displaystyle\int\limits_{0}^{r}}
\frac{e^{f(\frac{\eta}{a(t)})}}{a(t)^{2}}\eta d\eta\right]
\end{equation}%
\begin{equation}
=\rho\left[
\begin{array}
[c]{c}%
\frac{\partial}{\partial t}\left(  \frac{\xi}{a(t)^{2}}x+\frac{\overset{\cdot
}{a}(t)}{a(t)}y\right)  +\left(  \frac{\dot{a}(t)}{a(t)}x-\frac{\xi}{a(t)^{2}%
}y\right)  \frac{\partial}{\partial x}\left(  \frac{\xi}{a(t)^{2}}%
x+\frac{\overset{\cdot}{a}(t)}{a(t)}y\right)  \\
+\left(  \frac{\xi}{a(t)^{2}}x+\frac{\dot{a}(t)}{a(t)}y\right)  \frac
{\partial}{\partial y}\left(  \frac{\xi}{a(t)^{2}}x+\frac{\overset{\cdot}%
{a}(t)}{a(t)}y\right)  +\frac{y}{a(t)r}\left(  K\dot{f}(s)+\frac{2\pi}%
{\frac{r}{a(t)}}%
{\displaystyle\int\limits_{0}^{r}}
\frac{e^{f(\frac{\eta}{a(t)})}}{a(t)^{2}}\eta d\eta\right)
\end{array}
\right]
\end{equation}%
\begin{equation}
=\rho\left[
\begin{array}
[c]{c}%
-\frac{2\xi\dot{a}(t)}{a(t)^{3}}x+\left(  \frac{\ddot{a}(t)}{a(t)}-\frac
{\dot{a}(t)^{2}}{a(t)^{2}}\right)  y+\left(  \frac{\dot{a}(t)}{a(t)}%
x-\frac{\xi}{a(t)^{2}}y\right)  \frac{\xi}{a(t)^{2}}\\
+\left(  \frac{\xi}{a(t)^{2}}x+\frac{\dot{a}(t)}{a(t)}y\right)  \frac
{\overset{\cdot}{a}(t)}{a(t)}+\frac{y}{a(t)r}\left(  K\dot{f}(s)+\frac{2\pi
}{\frac{r}{a(t)}}%
{\displaystyle\int\limits_{0}^{r}}
e^{f(\frac{\eta}{a(t)})}\left(  \frac{\eta}{a(t)}\right)  d\left(  \frac{\eta
}{a(t)}\right)  \right)
\end{array}
\right]
\end{equation}%
\begin{equation}
=\frac{y\rho}{a(t)r}\left[  \left(  \ddot{a}(t)-\frac{\xi^{2}}{a(t)^{3}%
}\right)  r+K\dot{f}(s)+\frac{2\pi}{s}%
{\displaystyle\int\limits_{0}^{s}}
e^{f(\tau)}\tau d\tau\right]
\end{equation}%
\begin{equation}
=\frac{y\rho}{a(t)r}\left[  -\lambda s+K\dot{f}(s)+\frac{2\pi}{s}%
{\displaystyle\int\limits_{0}^{s}}
e^{f(\tau)}\tau d\tau\right]  \text{.}\label{eq58}%
\end{equation}
To make equations (\ref{eq581}) and (\ref{eq58}) equal zero, we may require
the Liouville equation from differential geometry:%
\begin{equation}
\left\{
\begin{array}
[c]{c}%
\ddot{f}(s)+\frac{\dot{f}(s)}{s}+\frac{2\pi}{K}e^{f(s)}=\frac{2\lambda}{K}\\
f(0)=\alpha\text{, }\dot{f}(0)=0.
\end{array}
\right.  \label{Liouville2-DRotation}%
\end{equation}

We note that the global existence of the initial value problem of the
Liouville equation (\ref{Liouville2-DRotation}) has been shown by Lemma 10 in
\cite{YuenJMAA2008a}. Thus, we confirm that functions
(\ref{2-DIsothermalRotation}) are a family of classical solutions for the
isothermal ($\gamma=1$) Euler-Poisson equations (\ref{Euler-Poisson2DRotation}%
) in $R^{2}$.

With Lemma \ref{lemma22--2drotation}, it is clear that\newline(I) With
$\lambda>0$,\newline(a) solutions (\ref{2-DIsothermalRotation}) are
non-trivially time-periodic, except for the case $a_{0}=\frac{\left\vert
\xi\right\vert }{\sqrt{\lambda}}$and $a_{1}=0$;\newline(b) if $a_{0}%
=\frac{\left\vert \xi\right\vert }{\sqrt{\lambda}}$ and $a_{1}=0$, solutions
(\ref{2-DIsothermalRotation}) are steady.\newline(II) With $\lambda\leq0$,
solutions (\ref{2-DIsothermalRotation}) are global in time.\newline Therefore
all of the rotational solutions (\ref{2-DIsothermalRotation}) with $\xi\neq0$,
are global in time.\newline We complete the proof.
\end{proof}

\section{Conclusion and Discussion}

Our results confirm that there exists a class of periodic solutions which can
be found by choosing a sufficiently small constant $a_{0}<<1$ in solutions
(\ref{2-DIsothermalRotation}), in the Euler-Poisson equations
(\ref{Euler-Poisson2DRotation}) in $R^{2}$, even without a negative
cosmological constant \cite{YuenCQG2009}. Here, the periodic rotation prevents
the blowup phenomena that occur in solutions without rotation
\cite{YuenJMAA2008a}.

It is open to show the existences of solutions and their stabilities for the
small perturbation of these solutions (\ref{2-DIsothermalRotation}). Numerical
simulation and mathematical proofs for the perturbational solutions are
suggested for understanding their evolution.

As our solutions in this paper works for the 2D case, the corresponding
rotational solutions in $R^{3}$ are conjectured. We conjecture that the
corresponding rotational solutions to Goldreich and Weber's solutions
(\ref{3-dgamma=4over3}) for the 3D Euler-Poisson equations with $\gamma=4/3$
\cite{GW} exist, such as the ones for the Euler equations
\cite{Yuen3DexactEuler}. Further research is expected to shed more light on
the possibilities.

\section{Acknowledgement}

The authors thank the reviewers and the editors for their helpful comments for
improving the quality of this article.

\end{document}